\algnewcommand{\algorithmicand}{\textbf{ and }}
\algnewcommand{\algorithmicor}{\textbf{ or }}
\algnewcommand{\OR}{\algorithmicor}
\algnewcommand{\AND}{\algorithmicand}
\newcommand{\Oh}{\mathcal{O}}
\algnewcommand{\LineComment}[1]{\State \(\triangleright\) #1}
\newtheorem{theorem}{Theorem}[section]
\newtheorem{lemma}[theorem]{Lemma}
\newtheorem{proposition}[theorem]{Proposition}
\theoremstyle{definition}
\title{A Faster Subquadratic Algorithm for the Longest Common Increasing Subsequence Problem}
\author{Anadi Agrawal}
\author{Paweł Gawrychowski}
\date{}
\affil{Institute of Computer Science, University of Wrocław, Poland}
\newcommand{\problem}[3]{
\begin{framed}
  \noindent
  \textbf{Problem:} #1

  \noindent
  \textbf{Input:} #2

  \noindent
  \textbf{Output:} #3
\end{framed}
}
\newcommand{\countelem}{\textsf{cnt}}
\newcommand{\prev}{\textsf{prev}}
\begin{document}
\maketitle

\begin{abstract}
The Longest Common Increasing Subsequence (LCIS) is a variant of the classical Longest Common Subsequence (LCS),
in which we additionally require the common subsequence to be strictly increasing. While the well-known ``Four Russians
'' technique can be used to find LCS in subquadratic time, it does not seem applicable to LCIS.
Recently, Duraj [STACS 2020] used a completely different method based on the combinatorial properties
of LCIS to design an $\Oh(n^2(\log\log n)^2/\log^{1/6}n)$ time algorithm. We show that an approach
based on exploiting tabulation can be used to construct an asymptotically faster $\Oh(n^2 \log\log n/\sqrt{\log n})$ time algorithm.
As our solution avoids using the specific combinatorial properties of LCIS, it can be also adapted for the
Longest Common Weakly Increasing Subsequence (LCWIS).
 \end{abstract}


\section{Introduction}

In the well-known Longest Common Subsequence problem we aim to find the length of the longest subsequence common
to two strings $A[1..n]$ and $B[1..n]$. A textbook exercise is to find it in $\Oh(n^{2})$ time~\cite{WagnerF74}, and using
the so-called ``Four Russians'' technique this has been brought down to $\Oh(n^{2}/\log^{2}n)$ for constant alphabets~\cite{WagnerF74}
and $\Oh(n^{2}\log\log n/\log^{2}n)$ for general alphabets~\cite{BilleF08}. Recently, there was some progress in
providing explanation for why a strongly subquadratic $\Oh(n^{2-\epsilon})$ time algorithm is unlikely~\cite{AbboudBW15,BringmannK15},
and in fact even achieving $\Oh(n^{2}/\log^{7+\epsilon}n)$ would have some exciting unexpected consequences~\cite{AbboudB18}.
In this paper we consider a related problem defined as follows:
\problem{Longest Common Increasing Subsequence (LCIS)}
{integer sequences $A[1..n]$ and $B[1..n]$}
{largest $\ell$ such that there exist indices $i_{1}<\ldots <i_{\ell}$ and $j_{1}<\ldots<j_{\ell}$
with the property that (i) $A[i_{k}]=B[j_{k}]$, for every $k=1,\ldots,\ell$, and (ii) $A[i_{1}]< \ldots < A[i_{\ell}]$.}
While this is less obvious than for LCS, LCIS can be also solved in $\Oh(n^{2})$ time~\cite{YangHC05}
(and in linear space~\cite{Sakai06}),
and it can be proved that a strongly subquadratic algorithm would refute SETH~\cite{DurajKP19}
(although faster algorithms are known for some special cases~\cite{KutzBKK11}).
However, as opposed to LCS, the usual ``Four Russians'' approach, that roughly consists in partitioning the DP
table into blocks of size $\log n \times \log n$, doesn't seem directly applicable to LCIS.
Very recently, Duraj~\cite{Duraj20} used a completely different approach based on some nice combinatorial 
properties specific to LCIS to design a subquadratic $\Oh(n^{2}(\log\log n)^{2}/\log^{1/6}n)$ time algorithm.

\paragraph{Our contribution.}
We design a faster subquadratic $\Oh(n^{2}\log\log n/\sqrt{\log n})$ time algorithm for LCIS. Interestingly,
instead of using the combinatorial properties of LCIS as in the previous work we apply a technique based
on exploiting tabulation (but differently than in the classical ``Four Russians'' approach). This allows our algorithm to
be modified to solve the Longest Common Weakly Increasing Subsequence (LCWIS) problem
(for which an $\Oh(n^{2-\epsilon})$ time algorithm is also known to refute SETH~\cite{Polak18}). This doesn't
seem to be the case for Duraj's approach based on bounding the number of so-called significant symbol
matches, that for LCWIS might be $\Omega(n^{2})$.
Throughout the paper we assume that $A$ and $B$ are of the same length, and the goal is to calculate the
length of LCIS. However, the algorithm can be easily modified to avoid this assumption and recover the subsequence
itself.

\paragraph{Overview of the paper.}
Our algorithm is based on combining two different procedures. By appropriately selecting the parameters, the
overall complexity becomes $\Oh(n^{2}\log\log n/\sqrt{\log n})$ as explained in Section~\ref{sec:union}.

The first procedure described in Section~\ref{sec:first} works fast when there are only few distinct elements
in both sequences. We start with a solution based
on dynamic programming working in $\Oh(t\cdot n^{2})$ time, where $t$ is the number of distinct elements in both sequences.
Then, we exploit tabulation to decrease its running time to $\Oh(t\cdot n^{2}/\log n)$.

The second procedure described in Section~\ref{sec:second} is efficient when there are not too many matching pairs,
that is, pairs $(i,j)$ such that $A[i]=B[j]$.
The main idea is to calculate, for every such pair, LCIS of $A[1..i]$ and $B[1..j]$ that ends with $A[i]=B[j]$. This is done by applying an appropriate
dynamic predecessor structure. This roughly follows the ideas of Duraj, except that instead of using van Emde Boas trees
we notice that, in fact, one can plug in any balanced search trees with efficient split/merge.

In Section~\ref{sec:weakly} we explain the necessary modification required to adapt our solution for LCWIS.

\section{Preliminaries}

We work with sequences consisting of integers. For such a sequence $A$, we write $A[i]$ to denote
the $i$-th element, and $A[1..i]$ to denote the prefix of length $i$. $|A|$ is the length of $A$.
Let $\sigma$ be the sequence consisting of all distinct integers present in $A$ and $B$,
arranged in the increasing order, and $\countelem(v)$ be the total number of occurrences of $\sigma[v]$
in $A$ and $B$.

We call a pair of indices $(x,y)$ a \textit{matching pair} when $A[x]=B[y]$.
Further, we call it a $\textit{$\sigma[i]$-pair}$ when $A[x]=B[y]=\sigma[i]$.

We write $LCIS(i,j)$ to denote $LCIS(A[1..i],B[1..j])$, that is, the longest increasing common subsequence
of $A[1..i]$ and $B[1..j]$. 
We write $LCIS^{\rightarrow}(i, j)$ to denote the longest strictly increasing subsequence of $A[1..i]$ and $B[1..j]$
which includes both $A[i]$ and $B[j]$ (so in particular, $A[i]=B[j]$).

Throughout the paper, $\log x$ denotes $\log_{2} x$.

\section{First Solution}
\label{sec:first}

In this section we describe an algorithm for finding LCIS in $\mathcal{O}(|\sigma|\cdot n^2 / \log n)$ time.

Let $dp_{v}[i][j]$ denote the largest possible length of a sequence $C$ such that:
\begin{enumerate}
\item $C$ is an increasing common subsequence of $A[1..i]$ and $B[1..j]$,
\item $C$ consists of elements not larger than $\sigma[v]$.
\end{enumerate}
Then, our goal is to compute $dp_{|\sigma|}[n][n]$.

All $|\sigma|\cdot n^{2}$ entries in $dp$ can be calculated in $\Oh(1)$ time each using the following recurrence:
\[ 
	dp_{v + 1}[i][j] =
	\begin{dcases*} 
	\text{$\max\{ dp_v[i][j], dp_v[i - 1][j - 1] + 1\}, $} & if  $A[i] = B[j] = \sigma[v + 1]$, \\ 
	\text{$\max\{ dp_v[i][j], dp_{v + 1}[i - 1][j], dp_{v + 1}[i][j - 1]\}, $} & otherwise.
	\end{dcases*}
\]
In order to decrease the time we will speed up calculating $dp_{v+1}$ from $dp_{v}$.
Because calculating $dp_{v+1}$ only requires the knowledge of $dp_{v}$, we will only keep the current $dp_{v}$
and update all of its entries to obtain $dp_{v+1}$.

\begin{lemma}
\label{lem:unit}
$0 \le dp_{v}[i][j] - dp_{v}[i][j - 1] \le 1$ and $0 \le dp_{v}[i][j] - dp_{v}[i-1][j] \le 1$.
\end{lemma}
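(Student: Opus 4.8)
The plan is to prove both inequalities by a double induction on $v$ and then on $i+j$, using the recurrence for $dp$ together with an exchange/prefix-extension argument at the combinatorial level. I will only spell out the claim $0 \le dp_{v}[i][j] - dp_{v}[i][j-1] \le 1$; the companion claim for the $i$-direction is entirely symmetric (swap the roles of $A$ and $B$, and of $i$ and $j$), so it suffices to treat one of them. The lower bound $dp_{v}[i][j] \ge dp_{v}[i][j-1]$ is immediate from the definition, since any sequence $C$ witnessing $dp_{v}[i][j-1]$ — an increasing common subsequence of $A[1..i]$ and $B[1..j-1]$ using elements $\le \sigma[v]$ — is also a common subsequence of $A[1..i]$ and $B[1..j]$ with the same properties. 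So the whole content is the upper bound $dp_{v}[i][j] \le dp_{v}[i][j-1] + 1$.

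For the upper bound I would argue as follows. Let $C$ be an optimal witness for $dp_{v}[i][j]$, of length $\ell = dp_{v}[i][j]$. If $C$ does not use position $j$ of $B$ (i.e.\ the matching of $C$ into $B[1..j]$ can be chosen to avoid the last index $j$), then $C$ is already a valid witness for $dp_{v}[i][j-1]$, so $dp_{v}[i][j-1] \ge \ell$ and we are done. Otherwise, every optimal matching of $C$ uses $B[j]$; in that case, delete the last element of $C$, obtaining a sequence $C'$ of length $\ell-1$ that is an increasing common subsequence of $A[1..i]$ and $B[1..j-1]$ using elements $\le \sigma[v]$ (the prefix of a matching avoiding $B[j]$ lives inside $B[1..j-1]$). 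Hence $dp_{v}[i][j-1] \ge \ell - 1$, which is exactly the desired bound. Alternatively, and perhaps more cleanly for a written proof, one can do the induction directly on the recurrence: in the case $A[i]=B[j]=\sigma[v']$ for some $v'\le v$ one has $dp_{v}[i][j] \le dp_{v}[i-1][j-1]+1 \le (dp_{v}[i-1][j-1]$-type prefix bounds$) $, and one pushes the inductive hypotheses through the $\max$; in the "otherwise" case $dp_{v}[i][j] = \max\{dp_{v-1}[i][j], dp_{v}[i-1][j], dp_{v}[i][j-1]\}$ and each term is controlled by induction (the first by the $v$-induction, giving $dp_{v-1}[i][j] \le dp_{v-1}[i][j-1]+1 \le dp_{v}[i][j-1]+1$, the second by the $i$-direction claim which we prove simultaneously, and the third is trivially $dp_{v}[i][j-1] \le dp_{v}[i][j-1]+1$).

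The main obstacle is organizing the induction so that the two inequalities (the $j$-direction and the $i$-direction) feed into each other without circularity: in the "otherwise" branch the $j$-claim at $(i,j)$ invokes the $i$-claim at $(i,j)$, and vice versa, so I would set up a single joint statement "for all $v$, for all $i,j$: $0\le dp_v[i][j]-dp_v[i][j-1]\le 1$ \emph{and} $0\le dp_v[i][j]-dp_v[i-1][j]\le 1$" and induct on $v$ with an inner induction on $i+j$, being careful that within fixed $(v,i,j)$ each claim only ever refers to strictly smaller $i+j$ (for the same $v$) or to $v-1$, never to the other claim at the same $(v,i,j)$. This is straightforward once one notices that in the recurrence's "otherwise" case the three arguments of the $\max$ all have either smaller $v$ or smaller $i+j$, so the circular-looking dependency dissolves. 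The combinatorial prefix-extension argument sketched above is the conceptually transparent route and avoids the bookkeeping altogether, so I would likely present that and relegate the recurrence-based verification to a remark.
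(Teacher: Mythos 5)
Your main argument---take an optimal witness $C$ for $dp_{v}[i][j]$, delete its last element, and observe that the remainder embeds into $B[1..(j-1)]$, giving $dp_{v}[i][j-1]\ge |C|-1$---is exactly the paper's proof, and your handling of the lower bound and the symmetric $i$-direction claim also matches. The case split on whether $C$ uses position $j$ is harmless but unnecessary (the deletion argument covers both cases at once), and the recurrence-based induction you sketch as an alternative is not needed.
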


\begin{proof}
A subsequence of $B[1..(j - 1)]$ is still a subsequence of $B[1..j]$, so $dp_{v}[i][j-1] \le dp_{v}[i][j]$.
Consider a sequence $C$ corresponding to $dp_{v}[i][j]$, and let $C'$ be $C$ without the last element. 
Because $C$ is a subsequence of $B[1..j]$, $C'$ is a subsequence of $B[1..(j-1)]$. So, $C'$ is an increasing
subsequence of $A[1..i]$ and $B[1..(j-1)]$, hence $|C'| \le dp_{v}[i][j-1]$. As $|C|=|C'|+1$,
we conclude that $dp_{v}[i][j] \le dp_{v}[i][j-1] + 1$. The second part of the lemma follows by a symmetrical
reasoning.
\end{proof}

Instead of maintaining $dp_{v}$, we keep another table $dp'_{v}[i][j] = dp_{v}[i][j] - dp_{v}[i][j - 1]$ (where $dp_{v}[i][j] = 0$ for $j < 1$).
Due to Lemma~\ref{lem:unit}, each entry of $dp'_{v}$ is either 0 or 1. This allows us to store each row of $dp'_{v}$
by partitioning it into $\Oh(n / B)$ blocks of length $B$, with every block represented by a bitmask of size $B$
saved in a single machine word, where $B=\alpha\log n$ for some constant $\alpha$ to be fixed later.
By definition, $dp_{v}[i][j] = \sum\limits_{k = 1}^j dp'_{v}[i][k]$. In addition to $dp'_{v}$, we store the value of $dp_{v}[i][j]$
for every block boundary, so $\Oh(n^{2}/B)$ values overall. This will allow us later to recover any $dp_{v}[i][j]$ in constant
time by retrieving the value at the appropriate block boundary and adding the number of 1s in a prefix of some
bitmask. We preprocess such prefix sums for every possible bitmask in $\Oh(2^{B} \cdot B)$ time and space.

\begin{lemma}
\label{lem:upd}
$0 \le dp_{v + 1}[i][j] - dp_v[i][j] \le 1$.
\end{lemma}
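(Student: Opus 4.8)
The plan is to prove the two inequalities separately, both by directly manipulating the witnessing sequences, in the same spirit as the proof of Lemma~\ref{lem:unit}.

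For the lower bound $dp_{v+1}[i][j] \ge dp_v[i][j]$, I would observe that the only way $dp_{v+1}$ differs from $dp_v$ is that it additionally permits the element $\sigma[v+1]$. Hence any sequence $C$ that is an increasing common subsequence of $A[1..i]$ and $B[1..j]$ consisting of elements not larger than $\sigma[v]$ is, a fortiori, a sequence consisting of elements not larger than $\sigma[v+1]$, so it is also a candidate for $dp_{v+1}[i][j]$. Taking $C$ to realize $dp_v[i][j]$ gives $dp_v[i][j] = |C| \le dp_{v+1}[i][j]$.

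For the upper bound $dp_{v+1}[i][j] \le dp_v[i][j] + 1$, I would take a sequence $C$ realizing $dp_{v+1}[i][j]$ and split into two cases according to whether $\sigma[v+1]$ occurs in $C$. If it does not, then every element of $C$ is at most $\sigma[v]$, so $C$ is a valid candidate for $dp_v[i][j]$ and $|C| \le dp_v[i][j] \le dp_v[i][j] + 1$. If $\sigma[v+1]$ does occur in $C$, then since $C$ is strictly increasing and $\sigma[v+1]$ is the largest value allowed, $\sigma[v+1]$ must be the last (and only) occurrence of that value in $C$; let $C'$ be $C$ with this last element removed. Then $C'$ is still an increasing common subsequence of $A[1..i]$ and $B[1..j]$ (dropping a suffix element preserves being a subsequence), and all its elements are strictly smaller than $\sigma[v+1]$, hence at most $\sigma[v]$. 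Therefore $|C'| \le dp_v[i][j]$, and since $|C| = |C'| + 1$ we get $dp_{v+1}[i][j] = |C| \le dp_v[i][j] + 1$.

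I do not expect any real obstacle here; the only point that needs a moment of care is the argument that when $\sigma[v+1]$ appears in the optimal sequence it appears exactly once and at the very end, which follows immediately from strict monotonicity together with $\sigma[v+1]$ being the maximum admissible value. Combining the two bounds yields $0 \le dp_{v+1}[i][j] - dp_v[i][j] \le 1$.
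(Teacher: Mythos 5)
Your proof is correct and follows essentially the same route as the paper: the lower bound by inclusion of candidate sequences, and the upper bound by noting that strict monotonicity forces $\sigma[v+1]$ to appear at most once, at the end, so dropping the last element yields a candidate for $dp_v[i][j]$. Your explicit case split on whether $\sigma[v+1]$ occurs is slightly more careful than the paper's version, which simply always removes the last element, but the underlying argument is the same.
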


\begin{proof}
Because allowing using more elements cannot decrease the length, $dp_v[i][j] \le dp_{v + 1}[i][j]$.
Let $C$ be a sequence corresponding to $dp_{v+1}[i][j]$, and let $C'$ be $C$ without the last element.
Because $C$ is strictly increasing and $\sigma$ consists of all distinct elements, the elements of
$C'$ are not larger than $\sigma(v)$, so $|C'| \le dp_{v}[i][j]$. Then, using $|C'|+1=|C|$ we obtain
that $dp_{v + 1}[i][j] - 1 \le dp_v[i][j]$.
\end{proof}

We now describe how to calculate $dp'_{v + 1}$. We start with describing an approach
that works in $\Oh(n^{2})$ time and then explain how to accelerate it to $\Oh(n^{2}/\log n)$.
We use the recursion for $dp_{v+1}[i][j]$ to update the rows of $dp'_{v + 1}$ one-by-one. While updating
the entries in a row going from left to right we are no longer guaranteed that $dp_{v + 1}[i][j] \le dp_{v + 1}[i][j + 1]$,
so $dp'_{v + 1}[i][j]$ can become negative. To overcome this issue, we immediately propagate each value to
the right: after increasing $dp_{v + 1}[i][j]$ (by one due to Lemma~\ref{lem:upd}) we also increase every
$dp_{v + 1}[i][k]$ equal to the original value of $dp_{v + 1}[i][j]$, for all $k>j$. This translates into setting
$dp'_{v + 1}[i][j]$ to 1 and setting $dp'_{v + 1}[i][k]$ to 0, for the smallest $k>j$ such that $dp'_{v + 1}[i][k]=1$. 
To implement this efficiently, we maintain $k$ while considering $j=1,2,\ldots,n$ in $\Oh(n)$ overall
time. The details of this procedure are shown in Algorithm~\ref{alg:row}.

\begin{algorithm}[H]
\begin{algorithmic}[1]
\Procedure{CalculateRow}{$v, i$}
	\State $ptr \gets 1$
	\State $cur\_value \gets 0$
	\State $prv\_value \gets 0$
	\State $prv\_phase \gets 0$
	
	\For{$j = 1..n$}
		\State $dp'_{v + 1}[i][j] = dp'_v[i][j]$
	\EndFor
	\For{$j = 1..n$}
		\If{$ptr \le i$} $ptr \gets i + 1$
		\EndIf

		\While{$ptr \le n \algorithmicand dp'_{v + 1}[i][ptr] = 0$}
			\State $ptr \gets ptr + 1$
		\EndWhile
		
		\State $cur\_value \gets cur\_value + dp'_{v + 1}[i][j]$
		\LineComment{$cur\_value = \sum_{j'=1}^{j}dp'_{v+1}[i][j'] = \max \{ dp_{v}[i][j],dp_{v+1}[i][j-1]\}$}
		\LineComment{$prv\_phase = dp_{v}[i-1][j-1]$}
		\If{$A[i] = B[j] =\sigma[v + 1] \algorithmicand cur\_value = prv\_phase$}
			\State $dp'_{v + 1}[i][j] \gets 1$
			\State $cur\_value \gets cur\_value + 1$
			\If{$ptr \le n$} 
				$dp'_{v + 1}[i][ptr] \gets 0$
			\EndIf
		\EndIf
		
		\State $prv\_phase \gets prv\_phase + dp'_v[i - 1][j]$
		\State $prv\_value \gets prv\_value + dp'_{v + 1}[i - 1][j]$
		\LineComment{$prv\_value = dp_{v+1}[i-1][j]$}
		
		\If{$cur\_value < prv\_value$}
			\State	$cur\_value \gets prv\_value$
			\State $dp'_{v + 1}[i][j] \gets 1$
			\If{$ptr \le n$} 
				$dp'_{v + 1}[i][ptr] \gets 0$
			\EndIf			
		\EndIf
	\EndFor
\EndProcedure
\end{algorithmic}
\caption{Calculate the $i$-th row of $dp'_{v + 1}$}
\label{alg:row}
\end{algorithm}

We speed up Algorithm~\ref{alg:row} by a factor of $B$ by considering whole blocks of $dp'_{v + 1}$ instead of single entries.
Consider a single block of $dp'_{v + 1}$ consisting of the values of $dp'_{v + 1}[i][j], dp'_{v + 1}[i][j + 1], \ldots, dp'_{v + 1}[i][j + B - 1]$, and assume
that they have been already partially updated by propagating the maximum. To calculate their correct values
we need the following information:
\begin{enumerate}
\item $dp'_v[i - 1][j], dp'_v[i-1][j+1], \ldots, dp'_v[i - 1][j + B - 1]$,
\item $dp'_{v + 1}[i - 1][j], dp'_{v + 1}[i-1][j+1], \ldots, dp'_{v + 1}[i - 1][j + B - 1]$,
\item $dp'_{v + 1}[i][j], dp'_{v + 1}[i][j+1], \ldots, dp'_{v + 1}[i][j + B - 1]$,
\item $dp_v[i - 1][j-1]$,
\item $dp_{v + 1}[i - 1][j - 1]$,
\item $dp_{v+1}[i][j-1]$,
\item for which indices $j,j+1,\ldots,j+B-1$ we have $A[i]=B[j]=\sigma[v + 1]$.
\end{enumerate}
In fact, we can rewrite the procedure so that instead of the values 
$dp_v[i - 1][j-1]$, $dp_{v+1}[i - 1][j - 1]$, $dp_{v+1}[i][j-1]$ 
only the differences $dp_{v+1}[i - 1][j-1]-dp_v[i - 1][j-1]$ and
$dp_{v+1}[i][j-1]-dp_{v+1}[i - 1][j-1]$ are needed. 
By Lemma~\ref{lem:unit} and Lemma~\ref{lem:upd}, both
differences belong to $\{0,1\}$, so the whole information required for calculating the correct values consists of
$4B+2$ bits. Blocks $dp'$ are already stored in separate machine words, and we can prepare, for every
$v$, an array with the $j$-th entry set to 1 when $B[j]=v$, partitioned into $n/B$ blocks of length $B$,
where each block is saved in a single machine word, in $\Oh(|\sigma|\cdot n)$ time. This allows us
to gather all the required information in constant time and use a precomputed table of size $\Oh(2^{4B + 2})$ 
that stores a single machine word encoding the correct values in a block for every possible combination.
Additionally, the table stores the number of 1s to the right of the block that should be changed to 0.
The table can be prepared in $\Oh(2^{4B+2}\cdot B)$ time by a straightforward modification of Algorithm~\ref{alg:row}.
Now we can update a whole block in constant time by retrieving the precomputed answer, but
then we still might need to remove some 1s on its right. Instead of removing them one-by-one we
work block-by-block. In more detail, we maintain a pointer to the nearest block that might contain
a 1. Let the number of 1s there be $\ell$ and the number of 1s that still need to be removed
be $s$. As long as $s>0$, we remove $\min\{\ell,s\}$ leftmost 1s from the current block in constant time
using a precomputed table of size $\Oh(2^{B}\cdot B)$, decrease $s$ by $\min\{\ell,s\}$, and move to
the next block. This amortises to constant time per block
over the row.

We set $B = \frac{\log n}{5}$ as to make the required preprocessing $o(n)$. Then, the overall complexity
of the algorithm becomes $\mathcal{O}(|\sigma|\cdot n^2 / \log n)$.

\section{Second Solution}
\label{sec:second}

In this section we describe an algorithm for solving LCIS in $\Oh(\sum\limits_{v = 1}^{|\sigma|}{(\countelem(v))^2(1+\log^2(n /\countelem(v))}))$
time.

For every matching pair $(x,y)$, we will compute $LCIS^{\rightarrow}(x, y)$, called the result for $(x,y)$.
The algorithm proceeds in phases corresponding to the elements of $\sigma$, and
in the $v$-th step computes the results for all $\sigma[v]$-pairs.
During this computation we maintain, for every $r=1,2,\ldots,n$, a structure $D(r)$ that allows us to quickly determine, given
any $(x,y)$, if there exists an already processed matching pair $(x',y')$ with result $r$ such that $x'<x$ and $y'<y$. Each $D(r)$ is
implemented using the following lemma.

\begin{lemma}
\label{lem:batched}
We can maintain a set of points $S\subseteq [n]\times [n]$ under inserting a batch of $u\leq n$ points in $\Oh(u(1+\log\frac{n}{u}))$
time and answering a batch of $q\leq n$ queries of the form ``given $(x,y)$, is there $(x',y')\in S$ such that $x'<x$ and $y'<y$'' in
$\Oh(q(1+\log\frac{n}{q}))$ time.
\end{lemma}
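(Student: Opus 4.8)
The goal is a batched semi-dynamic structure for dominance queries on an $n\times n$ grid, with amortized/total cost $\Oh(u(1+\log\frac nu))$ per batch of $u$ insertions and $\Oh(q(1+\log\frac nq))$ per batch of $q$ queries. I would first observe that a dominance query ``is there $(x',y')\in S$ with $x'<x$ and $y'<y$'' is monotone: it depends only on whether $\min\{y' : (x',y')\in S,\ x'<x\}$ is less than $y$. So it suffices to maintain, over the $x$-axis, the staircase function $f(x)=\min\{y' : (x',y')\in S,\ x'<x\}$ — equivalently its lower-left staircase (the Pareto frontier of $S$) — and answer a query by a predecessor search for $x$ followed by comparing the stored $y$-value against $y$.

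**Key steps.** First I would build a balanced binary search tree (say a weight-balanced or red-black tree supporting $\Oh(\log n)$ split and merge) whose leaves are the points of the current frontier sorted by $x$, augmented so each subtree stores the minimum $y$-coordinate in it. To insert a batch of $u$ new points: sort the batch by $x$ (using a global radix/bucket sort on $[n]$, or maintain sortedness incrementally), then merge the sorted batch into the tree by repeatedly splitting the current tree at the next new $x$-coordinate; after inserting each new point, delete from the frontier all points it dominates (those immediately to its right with larger-or-equal $y$), which form a contiguous block and can be peeled off by one more split. To answer a batch of $q$ queries: sort the queries by $x$, then sweep through the tree once, and for each query return ``yes'' iff the minimum $y$ over the prefix of the frontier with $x$-coordinate $<x$ is $<y$.

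**Getting the bound.** The key is that splitting and merging in the prescribed \emph{batched} order gives the claimed bound rather than a naive $\Oh(u\log n)$. The standard tool is the ``finger''/batched split-merge analysis: merging a sorted list of $u$ elements into a balanced tree of size $m$ costs $\Oh(u(1+\log\frac{m}{u}))$ (this is the classic bound for merging two sorted sets / multi-insertion into balanced trees, e.g. Brown–Tarjan). Since the frontier always has size $\le n$, inserting $u$ points costs $\Oh(u(1+\log\frac nu))$; the dominated-point deletions only ever remove each point once, so they amortize to $\Oh(1)$ per inserted point over the life of the structure. Similarly, answering $q$ queries by a single sorted sweep through the tree, using the batched finger-search bound, costs $\Oh(q(1+\log\frac nq))$. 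The sorting of each batch is $\Oh(u)$ or $\Oh(q)$ since coordinates are in $[n]$ and we may bucket-sort (or pay $\Oh(u\log u)\subseteq\Oh(u(1+\log\frac nu))$ if $u\le n$).

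**Main obstacle.** The delicate point is the amortized accounting when batches of insertions and batches of queries interleave, and making sure the deletions of dominated points are genuinely free: I would give each point a credit when inserted that pays for its eventual deletion, and argue that the split/merge operations triggered purely by deletions can be charged there. A secondary subtlety is that the batched merge bound $\Oh(u(1+\log\frac mu))$ requires the new elements to be merged \emph{simultaneously} (via repeated splitting at the new keys in order, reusing the finger) rather than one at a time — inserting them one-by-one would only give $\Oh(u\log n)$. One must also handle the case $u=0$ (or $q=0$) so the $1+\log\frac nu$ term is well-defined; this is why the ``$1+$'' is there, and for $u\ge 1$ we have $\log\frac nu\le\log n$, so the bound is never worse than $\Oh(u\log n)$, as required for the outer algorithm's complexity to telescope.
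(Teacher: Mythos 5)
Your proposal is correct and takes essentially the same approach as the paper: both reduce the problem to maintaining the Pareto frontier (non-dominated staircase) in a balanced BST supporting split/merge, answer a query by a predecessor search on $x$, and process each batch in sorted order to obtain the Brown--Tarjan-style $\Oh(b(1+\log\frac{n}{b}))$ bound. The only cosmetic differences are that the paper first splits the tree into $\Theta(b)$ pieces of size $\Theta(s/b)$ and merges them back afterwards rather than splicing the sorted batch in at each new key, and that it exploits the frontier's $y$-monotonicity in place of your (harmless but redundant) minimum-$y$ augmentation.
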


\begin{proof}
We observe that if the current $S$ contains two distinct points $(x_{i},y_{i})$ and $(x_{j},y_{j})$ with $x_{i}\leq x_{j}$ and $y_{i}\leq y_{j}$
then there is no need to keep $(x_{j},y_{j})$.  Thus, we keep in $S$ only points that are not dominated. Let $(x_{1},y_{1}),\ldots,(x_{k},y_{k})$
be these points arranged in the increasing order of $x$ coordinates (observe that we cannot have two non-dominated points with the same
$x$ coordinate). So, $x_{1}<x_{2}<\ldots < x_{k}$, where $k\leq n$, and because the points are not dominated also $y_{1}>y_{2}>\ldots y_{k}$.
We store the $x$ coordinates in a BST. This clearly allows us to answer a single query $(x,y)$ in $\Oh(\log n)$ time by locating
the predecessor of $x$. To insert a point $(x,y)$, we first check that it is not dominated by locating the predecessor
of $x$. Then, we might need to remove some of the subsequent $x$ coordinates that correspond to points that are dominated
by $(x,y)$. This can be efficiently implemented by maintaining a doubly-linked list of all points, and linking each $x$ coordinate
with its corresponding point. Insertion takes $\Oh(\log n)$ time plus another $\Oh(\log n)$ for every removed point, so $\Oh(\log n)$
amortised time, and a query concerning $(x,y)$ reduces to finding the predecessor of $x$ among the $x_{i}$s, which is
still too slow.

We use a BST that allows split and merge in $\Oh(\log s)$ time, where $s$ is the number of stored elements, for example AVL trees.
Additionally, we store the size of the subtree in every node. Then we have the following easy proposition.

\begin{proposition}
\label{pro:spl}
We can split BST into at most $b$ smaller BSTs containing $\Theta(s/b)$ elements each in $\Oh(b (1+\log {\frac{s}{b}}))$ time.
\end{proposition}

\begin{proof}
As long as there is a BST of size at least $2s/b$ we split it into two BSTs of (roughly) equal sizes.
Assuming for simplicity that both $s$ and $b$ are powers of 2, this takes $\Oh(\sum\limits_{i = 0}^{\log b - 1}2^i\log (s / 2^i))$
overall time, which can be bounded by calculating $\int_{1}^{b}\log(s/x)dx=\Oh(b(1+\log(s/b)))$.
\end{proof}

To process a batch of $b$ insertions/queries efficiently, we first sort them in $\Oh(b(1+\log (n / b)))$ time.
Then, we split the BST into at most $b$ smaller BSTs containing $\Theta(s/b)$ elements each, where $s$ is the number
of stored elements, using Proposition~\ref{pro:spl}. Because insertions/queries are sorted, we can determine for each
of them the relevant BST by a linear scan, and then insert/query the relevant BST in $\Oh(1+\log(s/b))$ time per operation
(if there are more than $s/b$ insertions to the same smaller BST, we split it into trees containing single elements, and
partition the insertions into groups of $\Theta(s/b)$).
Finally, we merge the BSTs into pairs, quadruples, and so on. By the calculation from the proof
of Proposition~\ref{pro:spl} this also takes $\Oh(b(1+\log(s/b)))$ time.
\end{proof}

Lemma~\ref{lem:batched} is already enough to binary search for the result of $(x,y)$ in $\Oh(\log^{2}n)$ time due to the following
property.

\begin{lemma}
\label{lem:bsearch}
Consider any $r$ and an already processed matching pair $(x',y')$ with result $r$. Then either $r=1$ or
there exists an already processed matching pair $(x'',y'')$ with result $r-1$ such that $x'' < x'$ and $y'' < y'$.
\end{lemma}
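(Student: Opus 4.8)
The plan is to prove Lemma~\ref{lem:bsearch} by a direct ``peel off the last element'' argument on the witnessing subsequence, exactly in the spirit of the proofs of Lemmas~\ref{lem:unit} and~\ref{lem:upd}. Suppose $(x',y')$ is an already processed matching pair with result $r=LCIS^{\rightarrow}(x',y')>1$. Let $C$ be an increasing common subsequence of $A[1..x']$ and $B[1..y']$ of length $r$ that ends with $A[x']=B[y']$; such a $C$ exists by the definition of $LCIS^{\rightarrow}$. Let $C'$ be $C$ with its last element removed, so $|C'|=r-1\geq 1$, and let $A[x'']=B[y'']$ be the last element of $C'$ (it appears in both sequences since $C'$ is a common subsequence). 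Because $C$ is realised by strictly increasing index positions in both $A$ and $B$, the occurrence of the last element of $C'$ used in $A$ is at some position $x''<x'$, and likewise in $B$ at some position $y''<y'$. Hence $C'$ is an increasing common subsequence of $A[1..x'']$ and $B[1..y'']$ ending with $A[x'']=B[y'']$, which gives $LCIS^{\rightarrow}(x'',y'')\geq r-1$.

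The one subtlety is that we must produce a matching pair whose result is \emph{exactly} $r-1$, not merely at least $r-1$, and we must also argue it has already been processed. For the first point: among all matching pairs $(x'',y'')$ with $x''<x'$, $y''<y'$, and $LCIS^{\rightarrow}(x'',y'')\geq r-1$ (the set is nonempty by the previous paragraph), pick one minimising $LCIS^{\rightarrow}(x'',y'')$, or simply iterate the peeling argument: if $LCIS^{\rightarrow}(x'',y'')=s>r-1$, apply the same construction to it to get a dominated matching pair with $LCIS^{\rightarrow}$ value $\geq s-1\geq r-1$, strictly decreasing the value; repeating this finitely many times lands on a dominated matching pair with result exactly $r-1$. (Domination is preserved by transitivity of $<$ on the coordinates.) For the ``already processed'' point: phases process the elements of $\sigma$ in increasing order, and $A[x'']=B[y'']$ is strictly smaller than $A[x']=B[y']$ because $C$ is strictly increasing; thus the $\sigma$-phase handling $(x'',y'')$ strictly precedes the phase handling $(x',y')$, so $(x'',y'')$ is indeed already processed when we examine $(x',y')$.

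I expect the main obstacle to be purely expository rather than mathematical: making sure the iterated peeling terminates at value exactly $r-1$ and that coordinate domination is maintained throughout, while keeping the write-up short. Once Lemma~\ref{lem:bsearch} is in hand, it is used as follows (stated here for context, not as part of the proof): the possible results of a dominating processed pair form a ``downward closed'' set, so for a fixed query $(x,y)$ the predicate ``$\exists$ processed $(x',y')$ dominated-by $(x,y)$ with result $\geq r$'' is monotone in $r$, and we can binary search for the largest such $r$ using $\Oh(\log n)$ queries to the structures $D(r)$ of Lemma~\ref{lem:batched}, giving the $\Oh(\log^2 n)$ bound mentioned after the statement.
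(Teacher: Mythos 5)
Your proof is correct and follows the same peel-off-the-last-element argument as the paper's proof; in fact you are more careful than the paper, which never explicitly addresses why the peeled pair has result \emph{exactly} $r-1$ rather than merely at least $r-1$. One small remark: the ``strictly decreasing the value'' step of your iteration is itself asserted without justification, and both it and the whole iteration can be avoided by the one-line observation that if $LCIS^{\rightarrow}(x'',y'')=s$ with $x''<x'$, $y''<y'$ and $A[x'']<A[x']$, then appending $A[x']=B[y']$ to a witness for $(x'',y'')$ yields $r\geq s+1$, so the first peeled pair already has result exactly $r-1$.
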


\begin{proof}
Assume that $r\geq 2$ and consider a sequence $C$ which realises the result for $(x',y')$.
Then $C[1..|C|-1]$ is an increasing subsequence of both $A[1..(x'-1)]$ and $B[1..(y'-1)]$.
Let $A[x'']$ and $B[y'']$ be its last elements in $A$ and $B$, respectively. Then $x'' < x'$, $y'' < y'$, and
$A[x'']=B[y'']$, so $(x'',y'')$ is a matching pair, and because $C$ is strictly increasing this matching pair
must have been already processed.
\end{proof}

However, our goal is to spend $\Oh(1+\log^{2}(n / \countelem(v)))$ time per every $(x,y)$. We exploit
the following property.

\begin{lemma}
\label{lem:monotone}
Consider two $\sigma[i]$ pairs $(x,y_{1})$ and $(x,y_{2})$, where $y_{1} < y_{2}$. The result for $(x,y_{2})$
is at least as large as for $(x,y_{1})$.
\end{lemma}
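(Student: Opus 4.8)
The plan is to show that any common increasing subsequence witnessing the result for $(x,y_1)$ is also a valid witness for $(x,y_2)$, so that the optimum for the latter can only be larger. Recall that $LCIS^{\rightarrow}(x,y_1)$ is the length of the longest strictly increasing sequence that is a common subsequence of $A[1..x]$ and $B[1..y_1]$ and that \emph{ends} with $A[x]=B[y_1]=\sigma[i]$; symmetrically for $(x,y_2)$. Let $C$ be a sequence realising the result for $(x,y_1)$. I would take $C' = C$, simply reinterpreting its last occurrence: the final element of $C$ equals $\sigma[i]$, and since $y_1 < y_2$ the prefix $B[1..y_1]$ is itself a prefix of $B[1..y_2]$, so the embedding of $C$ into $B[1..y_1]$ is also an embedding into $B[1..y_2]$. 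Moreover the last element of $C$, namely $\sigma[i]$, can be matched to position $y_2$ in $B$ (since $B[y_2]=\sigma[i]$) and to position $x$ in $A$ (since $A[x]=\sigma[i]$), which are exactly the endpoints required. Hence $C$ is a strictly increasing common subsequence of $A[1..x]$ and $B[1..y_2]$ ending with $A[x]=B[y_2]$, which gives $LCIS^{\rightarrow}(x,y_2) \geq |C| = LCIS^{\rightarrow}(x,y_1)$.

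One detail worth checking carefully is that the embedding into $B$ can be chosen so that the last element of $C$ lands precisely at position $y_1$ (and then at $y_2$): by definition of $LCIS^{\rightarrow}$ the witness $C$ ends with the symbol $B[y_1]$, and we are free to align that final symbol with index $y_1$ itself, since all earlier symbols of $C$ are strictly smaller than $\sigma[i]$ and are embedded at indices $< y_1 < y_2$. The same remark applies to the $A$-side, where nothing changes at all since both pairs share the first coordinate $x$. The strict-increasingness of $C$ is inherited verbatim.

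I do not expect a real obstacle here; the statement is essentially a monotonicity observation and the only thing to be pedantic about is that $LCIS^{\rightarrow}$ is defined with a \emph{mandatory} endpoint, so one must verify that the endpoint constraint at $y_2$ (rather than $y_1$) is still met — which it is, precisely because $B[y_1]=B[y_2]=\sigma[i]$ and both pairs are $\sigma[i]$-pairs. This monotonicity is exactly what later lets the algorithm avoid a separate binary search for every $(x,y)$ with the same first coordinate $x$, amortising the $\Oh(\log^2(n/\countelem(v)))$ cost across the $\countelem(v)$-many pairs sharing that row.
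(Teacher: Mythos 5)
Your proof is correct and follows the same route as the paper's one-line argument: take a witness $C$ for $(x,y_1)$ and re-match its final element $\sigma[i]$ to position $y_2$ instead of $y_1$, which is valid since $B[y_1]=B[y_2]=\sigma[i]$ and the earlier elements of $C$ embed at positions $<y_1<y_2$. The extra care you take about the mandatory-endpoint convention of $LCIS^{\rightarrow}$ is exactly the right detail to check, and nothing more is needed.
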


\begin{proof}
Consider a sequence $C$ which realises $LCIS^\rightarrow(x, y_1)$.
Then, replacing $y_{1}$ with $y_{2}$ we obtain a valid candidate for the value of $LCIS^\rightarrow(x, y_2)$.
\end{proof}

Consider all $\sigma[v]$ pairs with the same $x$ coordinate $(x,y_{1}),(x,y_{2}),\ldots,(x,y_{\countelem({\sigma[v]})})$.
We binary search for the result of $(x,y_{i})$ for $i=\countelem(v),\ldots,2,1$. By Lemma~\ref{lem:monotone},
in the $i$-th step we can start with the result found in the $(i+1)$-th step. Using doubling binary search,
by convexity of the log function the overall complexity becomes $\Oh(\countelem(v) (1+\log(n / \countelem(v))))$. This is still
too slow, as every step involves a separate invocation Lemma~\ref{lem:batched} and takes $\Oh(\log n)$ time.
To obtain the final speed up, we process all $x$ coordinates $x_{1},x_{2},\ldots,x_{\countelem(v)}$ together.
The high level idea is to synchronise all binary searches and exploit the possibility of asking a batch of queries.

We start with modifying the proof of Lemma~\ref{lem:batched} to allow for more general queries:
given $x$, we want to find the smallest $y$ such that there exists $(x',y')\in S$ with $x' < x$ and $y' < y$
(or detect that there is none). The modification is straightforward and doesn't increase the time complexity.
Now we can restate processing all pairs with the same $x$ coordinates.
We start with a counter $c$ initially set to $n$
and $i$ set to $\countelem(v)$. As long as $i\geq 1$, we use doubling binary search starting at $c$ to find the result
for $(x,y_{i})$. Let $c'$ be the found result. We use the modified Lemma~\ref{lem:batched}
to determine the smallest $y$ such that $c'$ is the result for $(x,y)$ and then keep decreasing $i$ as
long as $i\geq 1$ and $y_{i} > y$. Then, we decrease $c'$ by $1$ and repeat.

We further reformulate processing all pairs with the same $x$ coordinate. Consider a conceptual
complete binary tree on $n$ leaves (without losing generality, $n$ is a power of 2). Every node corresponds
to an interval $[a,b]$, and by querying such a node we will understand querying structure $D(a)$ with
the current $(x,y_{i})$. Consider the leaf corresponding to $c$. Calculating $c'$ with
doubling binary search can be phrased as starting at the leaf corresponding to $c$ and going up as
long as the query at the current node fails 
(we only need to ask a query if the previous node was
the right child of the current node; otherwise, we can immediately jump
to the nearest ancestor with such property).
After having reached the first ancestor
for which the query succeeds, we descend from its left child to the leaf corresponding to $c'$ by
repeating the following step: if querying the right child of the current node succeeds we descend
to the right child, and otherwise we descend to the left child.

Now we are able to synchronize the binary searches as follows. We traverse the conceptual complete
binary tree recursively: to traverse the subtree rooted at node $u$ with children $u_{\ell}$ and $u_{r}$
we (i) visit $u$, (ii) recursively traverse the subtree rooted at $u_{r}$, (iii) visit $u$ again, (iv) recursively traverse the
subtree rooted at $u_{\ell}$. Thus, every node is visited twice. We claim that
when visiting the nodes of the conceptual complete binary tree using this strategy, for any $x$
coordinate we are always able to wait till we encounter the node that should be queried next.
This is formalised in the following lemma.

\begin{lemma}
Let the result for $(x,y_{i+1})$ be $c$ and the result for $(x,y_{i})$ be $c'<c$. 
All queries necessary to calculate $c'$ can be answered during the traversal after the
second visit to $c$ and before the second visit to $c'$. 
\end{lemma}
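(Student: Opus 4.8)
The plan is to track, for a fixed $x$ coordinate, exactly which node of the conceptual complete binary tree must be queried next during the binary search that computes $c'$ from $c$, and to show that this node is always visited at a point of the traversal that lies (weakly) between the second visit to $c$ and the second visit to $c'$. First I would recall the two phases of computing $c'$: an ascending phase, which starts at the leaf corresponding to $c$ and walks up, querying only when the previous node was a right child, until the first ancestor $w$ whose query succeeds; and a descending phase, which starts at the left child of $w$ and walks down to the leaf corresponding to $c'$, at each internal node querying the right child and going right on success, left on failure. I would note that every node that actually gets queried in either phase is an ancestor-or-equal of the leaf $c'$ on one side and lies in the ``gap'' between the two leaves $c$ and $c'$; in particular all these nodes hang off the path from $c$ up to $w$ and down to $c'$, and every one of them is either on this path or is the right child of a node on the descending part of the path.

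Next I would set up the traversal order precisely. The recursive rule — visit $u$, recurse right, visit $u$ again, recurse left — is an Euler-tour-like walk in which the first visit to $u$ precedes everything in $u$'s subtree, the second visit to $u$ separates $u$'s right subtree from its left subtree, and within a subtree the right side is entirely traversed before the left side. The key order facts I would extract are: (1) for a node $u$ and its right child $u_r$, the second visit to $u$ comes before the first visit to $u_r$ is false — rather, the \emph{first} visit to $u$ comes before anything in $u_r$, and the \emph{second} visit to $u$ comes \emph{after} all of $u_r$'s subtree and before all of $u_\ell$'s subtree; and (2) for a node $u$ and its left child $u_\ell$, the second visit to $u$ comes just before the first visit to $u_\ell$, which in turn precedes the second visit to $u_\ell$. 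Because $c' < c$, in the binary tree the leaf $c'$ lies strictly to the left of the leaf $c$, so $w$ (their lowest common relevant ancestor, the first successful ancestor) has $c$ in its right subtree's descent path region and $c'$ in its left subtree.

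Then I would walk through the argument node by node. For the ascending phase: each queried node on the way up from $c$ to $w$ is an ancestor of $c$ whose right-subtree contains $c$'s side of the path; by order fact (1) the second visit to such an ancestor $u$ happens after all of $u$'s right subtree, hence after the second visit to $c$, and the ascending queries are encountered in increasing order of depth-from-root as the traversal proceeds — so each is reached after the second visit to $c$. The node $w$ itself is visited a second time after its right subtree (which contains $c$) and before its left subtree (which contains $c'$), so the second visit to $w$ sits strictly between the second visit to $c$ and the second visit to $c'$. For the descending phase: starting at $w_\ell$ and going down to $c'$, every queried node is (the right child of) a node on this downward path inside the left subtree of $w$; by order fact (2), the second visit to the parent precedes the first visit to the left child and everything below it, and in particular all of these descent-phase queries occur after the second visit to $w$ but before the traversal finally reaches and second-visits the leaf $c'$ (a leaf is ``visited twice'' consecutively, so its second visit is essentially when the descent finishes). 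Stitching the two phases: ascending queries all after second visit to $c$, then second visit to $w$, then descending queries, then second visit to $c'$ — exactly the claimed window.

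The main obstacle I expect is bookkeeping the two subtly different ``visit'' orderings cleanly: the ascending phase wants ``second visit of an ancestor comes \emph{after} its right subtree,'' while the descending phase wants ``second visit of an ancestor comes \emph{before} its left subtree,'' and one has to be careful that the single node $w$ plays the hinge role correctly (its right subtree side for the tail end of the ascent, its left subtree side for the start of the descent) and that the ``only query if the previous node was a right child'' optimization does not skip a node we rely on. A secondary subtlety is the boundary behavior when $c'$ or an intermediate descent target is itself a node we ``jump to'' without querying — I would handle this by observing that skipped nodes impose no constraint, so it suffices that the nodes we \emph{do} query fall in the window. I would therefore structure the proof as: (i) formalize the traversal order with the two ordering lemmas about first/second visits versus left/right subtrees; (ii) characterize the exact set of queried nodes in the ascent and in the descent and their relative position to $c$, $w$, $c'$; (iii) combine. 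The convexity/amortization accounting that turns this into the overall $\Oh(\countelem(v)(1+\log(n/\countelem(v))))$ bound is not needed for this lemma and I would leave it to the surrounding text.
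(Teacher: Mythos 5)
Your proposal is correct and follows essentially the same route as the paper's proof: split the computation into the ascending phase (queries answered at the \emph{second} visits of the ancestors of the leaf $c$ whose right subtree contains $c$, all of which occur after the second visit to $c$) and the descending phase from the left child of the hinge node (queries answered at the \emph{first} visits of the right children along the downward path, all occurring before the traversal reaches the leaf $c'$). Your version merely makes the Euler-tour ordering facts explicit, which the paper leaves implicit.
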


\begin{proof}
The calculation consists of two phases. First, we need to ascend from the leaf corresponding
to $c$, reaching its first ancestor $u$ at which the query fails. Recall we only need to ask queries
if the previous node is the left child of the current node. For each such node $v$ we will be able
to use second visit to $v$ in the traversal. Thus, we will process all such queries after the second
visit to $u$. Then, we need to descend from the left child of $u$. In every step, we query
the right child $v_{r}$ of the current node $v$, and continue either in the left or in the right subtree of $v$.
To this end, we use the first visit to $v_{r}$ in the traversal.
\end{proof}

For each $x$ coordinate, by convexity of the log function, we need to query at most $\Oh(\countelem(v) (1+\log(n / \countelem(v))))$
nodes of the conceptual binary tree. Denoting by $q_{u}$ the number of queries to a node $u$,
we thus have $\sum_{u} q_{u} = s =
\Oh(\countelem(v)^{2}(1+\log (n / \countelem(v))))$. Invoking
Lemma~\ref{lem:batched}, the total time to answer all these queries is $\sum_{u} q_{u}(1+\log (n / q_{u}))$.
By convexity of the function $f(x)=x\log(n/x)$, this is maximised when all $q_{u}$s are equal,
but there are only $n$ of them, making the total time :
\[ \sum_{u} q_{u}(1+\log (n / q_{u})) \leq s(1+\log(n^{2}/s))  \leq s(1+\log(n^{2}/\countelem(v)^{2})) = \Oh(\countelem(v)^{2}(1+\log(n/\countelem(v)))^{2}) .\]

\section{Combining Solutions}
\label{sec:union}

Let $c$ be a parameter to be fixed later.
We call $\sigma[v]$ \textit{frequent} if $\frac{n}{c} < \countelem(v)$, and \textit{rare} otherwise.

We partition the sequence $\sigma$ into fragments. Each fragment is either a single frequent element or a maximal
range of rare elements. By definition of a frequent element and maximality of fragments consisting of rare elements,
we have $\Oh(c)$ fragments. We maintain the $dp_{v}$ table as in the first solution, but we only update it after having
processed a whole fragment. So, when considering a fragment starting at $\sigma[v]$ we only assume that the values
of $dp_{v-1}$ can be access in constant time. For a fragment consisting of a single frequent element, we proceed
exactly as in the first solution. In the remaining part of the description we describe how to process a fragment
consisting of rare elements $\sigma[v],\sigma[v+1],\ldots$.

We consider all $\sigma[v']$-pairs, for $v'=v,v+1,\ldots$. We will compute $LCIS^{\rightarrow}(x,y)$ for each such matching
pair $(x,y)$, and store it in the appropriate structure $D(r)$ implemented as described in Lemma~\ref{lem:batched}. 
To compute the values of $LCIS^{\rightarrow}(x,y)$ for all $\sigma[v']$-pairs, we use parallel binary search as
in the second solution with the following modification. To check if $LCIS^{\rightarrow}(x,y_{i}) > r$, we need to
consider two possibilities for the corresponding sequence $C$ ending at $A[x]=B[y_{i}]=\sigma[v']$:
\begin{enumerate}
\item If $C[|C|-1]$ belongs to the same fragment then it is enough to check if $D(r)$ contains a pair $(x',y')$ with
$x' < x$ and $y' < y_{i}$.
\item Otherwise, it is enough to check if $dp_{v-1}[x][y_{i}] \geq r$.
\end{enumerate}
Additionally, after having found $c'$ we need to keep decreasing $i$ as long as $i\geq 1$ and the answer for $(x,y_{i})$
is $c'$, and this needs to be tested in constant time per each such $i$. We again need to consider two possibilities,
and either compare $y_{i}$ with the value of $y'$ found by querying $D(c'-1)$ with $x$, or test if $dp_{v-1}[x][y_{i}] \geq r$
in constant time. Overall, this incurs only additional constant time per every step of the binary search for
every considered matching pair.

After having considered all $\sigma[v']$-pairs for the last element $\sigma[v']$ in the current fragment, we need to
compute $dp_{v'}$ from $dp_{v-1}$ and the calculated values of $LCIS^{\rightarrow}$. Of course, we want
to operate on $dp'_{v'}$ and $dp'_{v-1}$ instead of $dp_{v'}$ and $dp_{v-1}$. This is done row-by-row.
The $i$-th row is computed in two steps.

First, we need to set $dp_{v'}[i][j] = \max\{ dp_{v'}[i-1][j], dp_{v-1}[i][j]\}$
for every $j=1,2,\ldots,n$. This is done by processing whole blocks in constant time and precomputing the result for every possible
combination of the following information:
\begin{enumerate}
\item $dp'_{v'}[i-1][j],dp'_{v'}[i-1][j+1],\ldots,dp'_{v'}[i-1][j+B-1]$,
\item $dp'_{v-1}[i][j],dp'_{v-1}[i][j+1],\ldots,dp'_{v-1}[i][j+B-1]$,
\item $dp_{v'}[i-1][j-1]$,
\item $dp_{v-1}[i][j-1]$.
\end{enumerate}
This can be preprocessed in $\Oh(4^{B}\cdot B^{2})$ time after observing that, as in the first solution, only the difference
$dp_{v'}[i-1][j-1]-dp_{v-1}[i][j-1]$ is relevant and, additionally, it can be capped at $B$ (if it is bigger than $B$ then we can set it to $B$). The time is $\Oh(n/B)$.

Second, we need to consider the values of $LCIS^{\rightarrow}(i,j)$ computed for the current fragment.
If the result computed for a matching pair $(i,j)$ is $r$ then we need to update $dp_{v}[i][j'] = \max\{dp_{v}[i][j'], r\}$,
for every $j' \geq j$. This can be done by simultaneously scanning all such $j$s and the blocks. By maintaining
the maximum $r$, we can update the value of $dp_{v}[i][j]$ at the beginning of the block. Then, we
consider all other $j'$s belonging to the same block, and consider its corresponding result $r'$.
If $dp_{v}[i][j'] \geq r'$ then this result is irrelevant, and otherwise we must increase some of the values
in the block by 1 (as $dp_{v}[i][j'-1]$ is assumed to have been already updated and due to Lemma~\ref{lem:unit}).
As in the first solution, this is implemented by setting $dp'_{v}[i][j']=1$ and changing the nearest 1 into 0.
Overall, the time is bounded by the number of considered matching pairs plus additional $\Oh(n/B)$ time.

We set $B = \frac{\log n}{5}$ so that the preprocessing time is $o(n)$. For each frequent element we
spend $\Oh(n^{2}/B)$ time, so $\Oh(n^{2}/B\cdot c)$ overall. For each fragment consisting of rare elements,
the time is $\Oh(\countelem(v)^2 \log^2 (n / \countelem(v)))$ for every $v$ to compute the results,
and then $\Oh(n^{2}/B)$ plus the number of results. Using $\countelem(v)\leq n/c$, where $c$ is sufficiently large,
and calculating the derivative of $f(x)=x\log^{2}(n/x)$ we upper bound $\countelem(v)\log^{2}(n/\countelem(v)) \leq n/c\cdot\log^{2}c$
for every rare $v$, so the overall time is $\Oh(n^{2}/B\cdot c+n/c\cdot\log^{2}c\sum_{v}\countelem(v))=\Oh(n^{2}/B\cdot c+n^2/c\cdot\log^{2}c)$.

Choosing $c = \sqrt{\log n} \log \log n$ we obtain an algorithm working in $\Oh(n^2 \log \log n/\sqrt{\log n})$
time.

\section{Longest Common Weakly Increasing Subsequence}
\label{sec:weakly}

In this section we explain how to modify the algorithm to solve the weakly increasing version of the problem.
We adapt both solutions without changing their complexity as explained below, and then combine them using
the same threshold for the frequent/rare elements to arrive at $\Oh(n^2 \log \log n/\sqrt{\log n})$ complexity.

\subsection{First solution}

We define $dp$ as in the algorithm for LCIS. It can be calculated using the following recurrence (slightly different than
for LCIS):
\[ 
	dp_{v + 1}[i][j] =
	\begin{dcases*} 
	\text{$\max\{ dp_v[i][j], dp_{v + 1}[i - 1][j - 1] + 1\}, $} & if  $A[i] = B[j] = \sigma[v + 1]$, \\ 
	\text{$\max\{ dp_v[i][j], dp_{v + 1}[i - 1][j], dp_{v + 1}[i][j - 1]\}, $} & otherwise.
	\end{dcases*}
\]
The proof of Lemma~\ref{lem:unit} still holds, so we can store a table $dp'$ and retrieve any value of $dp$ from $dp'$
in constant time.

Algorithm~\ref{alg:row} stays essentially the same so we skip a detailed explanation. The speed up is implemented
by considering whole blocks of $dp'_{v + 1}$ instead of single entries. Consider a single block of $dp'_{v + 1}$ consisting of
the values of $dp'_{v + 1}[i][j], dp'_{v + 1}[i][j + 1], \ldots, dp'_{v + 1}[i][j + B - 1]$, and assume
that they have been already partially updated by propagating the maximum. To calculate their correct values
we need the following information:
\begin{enumerate}
\item $dp'_{v + 1}[i - 1][j], dp'_{v + 1}[i - 1][j+1], \ldots, dp'_{v + 1}[i - 1][j + B - 1]$,
\item $dp'_{v + 1}[i][j], dp'_{v + 1}[i][j+1], \ldots, dp'_{v + 1}[i][j + B - 1]$,
\item $dp_{v + 1}[i - 1][j-1]$,
\item $dp_{v+1}[i][j-1]$,
\item for which indices $j,j+1,\ldots,j+B-1$ we have $A[i]=B[j]=\sigma[v + 1]$.
\end{enumerate}
Once again we can rewrite the procedure so that instead of the values $dp_{v + 1}[i - 1][j - 1]$ and $dp_{v + 1}[i][j - 1]$ only
the difference $dp_{v + 1}[i][j - 1] - dp_{v + 1}[i - 1][j - 1]$ is needed. By Lemma~\ref{lem:unit}, the difference
belongs to $\{0, 1\}$, so the whole information required for calculating the correct values consists of $3B + 1$ bits.
This allows us to update the whole table in $\Oh(n^2 / B)$ as for LCIS.

We set $B = \frac{\log n}{4}$ as to make required preprocessing $o(n)$. Overall complexity of the algorithm
becomes $\Oh(|\sigma|n^2 / \log n)$.

\subsection{Second solution}

Calculating the result for each $\sigma[v]$-pair consists of two phases. 
In the first phase, for each $\sigma[v]$-pair $(x, y)$, we calculate the result 
assuming that all previous elements in the subsequence are strictly smaller than $\sigma[v]$. 
In the second phase, we calculate the result assuming that the previous element
is also equal to $\sigma[v]$.  The first phase can be implemented exactly as for LCIS
in $\Oh(\countelem(v)^{2}(1+\log^{2} (n / \countelem(v))))$ time. We now focus
on explaining how to implement the second phase.
Let $\prev_A[x]$ denote the greatest $x'$ fulfilling $A[x'] = A[x]$, if there is no such then $\prev_A[x] = 0$.
Similarly we define $\prev_B[y]$, both array can be prepared in negligible $\Oh(n\log n)$ time.

We analyze all $\sigma[v]$-pairs in the increasing order of rows and columns. 
Consequently, when analysing a pair~$(x, y)$, for all other $\sigma[v]$-pairs with 
$x' \leq x$, $y' \leq y$ we have already correctly calculated $LCWIS^{\rightarrow}(x', y')$. 
The proof of Lemma~\ref{lem:monotone} still holds for LCWIS, and implies that among
all other $\sigma[v]$-pairs $(x',y')$ such that $x'\leq x$ and $y'\leq y$ the
pair $(\prev_A[x], \prev_B[y])$ has the largest result. We can calculate $LCWIS^{\rightarrow}(x, y)$ as the maximum of the result computed in first phase and $LCWIS^{\rightarrow}(\prev_A[x], \prev_B[y]) + 1$.

The second phase takes only $\Oh(\countelem(v)^2)$ time, so the overall complexity remains
$\Oh(\countelem(v)^{2}(1+\log^{2} (n / \countelem(v))))$.

\section{Conclusions}

The $\Oh(n^2 \log \log n/\sqrt{\log n})$ complexity doesn't seem to be right answer yet,
at least for LCIS. It seems to us that one can apply the combinatorial bound of Duraj on the number
of significant pairs, and combine it with our approach, to achieve an even better complexity.
However, as this doesn't seem to result in a clean bound of (say) $\Oh(n^{2}/\log n)$ yet, we
leave determining the exact complexity for future work.

\bibliography{biblio}

\end{document}